\newcommand{\RS}{\text{\normalfont RS}}
\newcommand{\RM}{\text{\normalfont RM}}
\newcommand{\sh}{\text{\normalfont sh}}
\newcommand{\con}{\text{\normalfont con}}
\newcommand{\F}{\mathbb{F}}
\newcommand{\Fbar}{\overline{\mathbb{F}}}
\newtheorem{example}{Example}
\newtheorem{theorem}{Theorem}
\newtheorem{lemma}{Lemma}
\newtheorem{corollary}{Corollary}
\newtheorem{definition}{Definition}
\newcommand{\twocal}[2]{\mathcal{#1}_{\mathcal{#2}}}
\DeclareMathOperator{\wt}{wt}
\DeclareMathOperator{\dmin}{d_{\min}}
\title{\textbf{On Binary Shadow Codes}}
\author{Amir Tasbihi and Frank R. Kschischang\\
{\small Department of Electrical and Computer Engineering}\\
{\small University of Toronto}\\
{\small Emails: \texttt{\{tasbihi, frank\}@ece.utoronto.ca}}}
\date{\today}
\begin{document}
\maketitle

\begin{abstract}
\noindent
We generalize the shadow codes of Cherubini and Micheli
to include basic polynomials having arbitrary degree, and show
that restricting basic polynomials to have degree one or
less can result in improved lower bounds on the minimum distance of the code.
However, even these improved lower bounds suggest that shadow codes
have considerably inferior distance-rate characteristics compared
with the concatenation of a Reed--Solomon outer code
and a first-order Reed--Muller inner code.
\end{abstract}

\section{Introduction}

Cherubini and Micheli~\cite{cherubini2024anew} have recently introduced a new
class of low-rate binary linear codes called \emph{shadow codes}.  These codes
are obtained  by evaluating a product of basic polynomials over $\mathbb{F}_q$
(a field of odd characteristic) at a set of points where each basic polynomial
is guaranteed to be nonzero.  The resulting vector of nonzero values is mapped
to a vector over $\mathbb{F}_2$ via a homomorphism from the multiplicative
group of $\mathbb{F}_q$ to the additive group of $\mathbb{F}_2$.  Cherubini and
Micheli consider shadow codes where the basic polynomials are irreducible of
degree at least two.

In this note, we briefly discuss general properties of shadow codes and
generalize the construction to include basic polynomials having arbitrary
degree.  Restricting basic polynomials to degree one or less results in an
improved lower bound on the minimum distance of the the code.  Despite these
enhancements, those lower bounds suggest that shadow codes improve upon the
well-known Delsarte--Goethals codes \cite[Ch.~15]{MS77} only in the regime of
extremely long block lengths and in this regime the concatenation of a
Reed--Solomon outer code with a first-order Reed--Muller inner code leads to
superior code parameters.

Throughout this paper we use the following notation.  The set of
natural numbers (excluding zero)
is denoted as $\mathbb{N}$. For any $n \in \mathbb{N}$, $[n]
\triangleq \{1, \ldots, n\}$.  For a prime-power $q$, the finite field of size
$q$ is denoted with $\F_q$ and its algebraic closure is denoted as $\Fbar_q$.
Moreover, $\F_q^\ast \triangleq \F_q \setminus \{0\}$.  This
set forms a group under $\F_q$ multiplication.
The Hamming weight of a vector $\bm{v}$
is denoted as $\wt(\bm{v})$.
The minimum Hamming distance between distinct codewords
in a code $C$ is denoted as $\dmin(C)$.
For an $n \in
\mathbb{N}$, $\F_q[x_1, \ldots, x_n]$ denotes the ring of multivariate
polynomials with indeterminates $x_1, \ldots, x_n$ over $\F_q$.
For any $m \in \mathbb{N}$, the \emph{affine variety} defined
by any $m$ polynomials $P_1, \ldots, P_m \in \F_q[x_1, \ldots, x_n]$ is denoted
as $\mathcal{V}_q(P_1, \ldots, P_m)$; thus, 
\begin{equation}
\mathcal{V}_q(P_1, \ldots, P_m) \triangleq \left\{
\bm{\beta} \in \F_q^n: P_i(\bm{\beta}) = 0, \forall i \in [m]
\right\}.
\end{equation}

\section{Preliminaries}

\subsection{Absolutely Irreducible Polynomials}
For any prime-power $q$ and any $n \in \mathbb{N}$, a polynomial $P \in
\F_q[x_1, \ldots, x_n]$ is said to be \emph{irreducible} if whenever $P = AB$,
for some polynomials $A$ and $B \in \F_q[x_1, \ldots, x_n]$, then either $A$ or
$B$ is a constant field element.  Any univariate irreducible polynomial over
$\F_q$ splits (\emph{i.e.}, decomposes as a product of degree-one polynomials)
in some extension field of $\F_q$; however, the situation is
different for multivariate polynomials. Indeed, it is possible for a polynomial
$P \in \F_q[x_1, \ldots, x_n]$ to be irreducible over all algebraic extensions
of $\F_q$, \emph{i.e.}, irreducible over $\Fbar_q$.
Such a polynomial is then said to be
\emph{absolutely irreducible} over $\F_q$.

\begin{example}
The polynomial $P(x,y) = x^2 + y$ is absolutely irreducible over $\F_3$. On
the other hand, while $G(x,y) = x^2 + y^2$ is irreducible over $\F_3$, it is
not absolutely irreducible since $G(x,y) = (x - \alpha^2 y)(x +
\alpha^2 y)$, where $\alpha$ is a primitive element of $\F_9$.
\end{example}

The following lemma will be used to prove Theorem~\ref{thm:absolute
irreducibility} which provides a special class of absolutely irreducible
polynomials, extensively used throughout the paper.
\begin{lemma}
\label{lem:distinct roots}
Any irreducible polynomial in $\F_q[x]$ has distinct roots in $\Fbar_q$.
Moreover, distinct irreducible polynomials in $\F_q[x]$ do not have a common
root in $\Fbar_q$.
\end{lemma}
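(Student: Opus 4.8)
The plan is to derive both statements from elementary facts about greatest common divisors in $\F_q[x]$, using crucially that a gcd computed in $\F_q[x]$ remains a gcd after passage to any extension field (it is witnessed by a B\'ezout identity with coefficients in $\F_q[x]$), together with the fact that a finite field is perfect.

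For the first claim, I would argue by contradiction. Let $f \in \F_q[x]$ be irreducible; the cases $\deg f \le 1$ being immediate, assume $\deg f \ge 2$. If $f$ had a repeated root $\alpha \in \Fbar_q$, then $\alpha$ would also be a root of the formal derivative $f'$, so $x - \alpha$ would divide $\gcd(f,f')$; since this gcd may be computed by the Euclidean algorithm in $\F_q[x]$, it is a polynomial of positive degree in $\F_q[x]$ dividing $f$. Irreducibility then forces $f \mid f'$, and as $\deg f' < \deg f$ this gives $f' = 0$. Writing $p = \mathrm{char}(\F_q)$, the condition $f' = 0$ means $f(x) = \sum_i c_i x^{ip}$ with $c_i \in \F_q$. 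Because $\F_q$ is finite, the Frobenius map $a \mapsto a^p$ is surjective, so each $c_i = d_i^p$ for some $d_i \in \F_q$, and hence $f(x) = \sum_i d_i^p x^{ip} = \bigl(\sum_i d_i x^i\bigr)^p$, contradicting the irreducibility of $f$. Thus $f$ has only simple roots. I expect the $f' = 0$ step to be the only real obstacle---everything else is bookkeeping---and it is precisely where the finiteness (perfectness) of $\F_q$ enters.

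For the second claim, let $f$ and $g$ be distinct irreducible polynomials in $\F_q[x]$; since scalar factors do not affect the root set, I may normalize both to be monic (were they associates, then being monic they would coincide, contradicting distinctness). Suppose $\alpha \in \Fbar_q$ is a common root. Then the minimal polynomial $m_\alpha \in \F_q[x]$ of $\alpha$ over $\F_q$ divides both $f$ and $g$; since $f$ and $g$ are monic and irreducible, this forces $m_\alpha = f$ and $m_\alpha = g$, whence $f = g$---a contradiction. (Equivalently: $\gcd(f,g) = 1$ in $\F_q[x]$ by irreducibility and distinctness, so a B\'ezout relation $uf + vg = 1$ holds with $u, v \in \F_q[x]$, and evaluating at $\alpha$ yields $0 = 1$.) This part presents no real difficulty.

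Finally, I would append a brief remark that the finiteness hypothesis is genuinely needed for the first claim: over an imperfect field such as $\F_p(t)$, the polynomial $x^p - t$ is irreducible yet has the single root $t^{1/p}$ with multiplicity $p$, so the argument does not carry over verbatim; no such pathology occurs here, since every field under consideration is finite.
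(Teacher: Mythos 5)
The paper itself offers no proof here; it simply cites Dummit and Foote (p.~520, Prop.~9). Your self-contained argument is the standard one and is correct: for simple roots you use the derivative test (a repeated root would force $\gcd(f,f')$ to have positive degree, hence $f \mid f'$ by irreducibility, hence $f'=0$ by degree count) together with perfectness of $\F_q$ (surjectivity of Frobenius) to show $f'=0$ would exhibit $f$ as a $p$-th power; for the second claim you use minimal polynomials (or, equivalently, a B\'ezout relation). Both steps are watertight, and the remark that $\gcd$ in $\F_q[x]$ is unchanged by passing to $\Fbar_q[x]$ is exactly the point that makes the argument go through. One small matter of statement hygiene: as literally written, ``distinct irreducible polynomials'' would be falsified by associates such as $f$ and $2f$, which are distinct yet share all roots. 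You flag this with your monic normalization, and the paper only ever applies the lemma (in Theorem~\ref{thm:absolute irreducibility} and Corollary~\ref{cor:number of zeros}) to \emph{monic} irreducibles, so in context your proof covers precisely what is needed; it would be cleaner, though, to state up front that the second claim concerns distinct \emph{monic} (or non-associate) irreducibles rather than to defer that qualification to a parenthetical.
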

\begin{proof}
See~\cite[p.~520, Prop.~9]{dummit2004abstract}.
\end{proof}

\begin{theorem}
\label{thm:absolute irreducibility}
For any finite field $\F_q$ and any $m \in \mathbb{N}$, let $P_1, \ldots, P_m
\in \F_q[x]$ be distinct monic irreducible polynomials. Then, for any $\gamma
\in \F_q^\ast$, the polynomial
\[
Q(x,y) = y^2 - \gamma \prod_{i=1}^{m}P_i(x)
\]
is absolutely irreducible.
\end{theorem}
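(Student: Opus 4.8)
The plan is to work in the ring $\overline{\mathbb{F}}_q[x,y] = (\overline{\mathbb{F}}_q[x])[y]$ and reduce absolute irreducibility of $Q$ to a statement about squares. Since $Q$ is monic of degree $2$ in $y$, any factorization $Q = AB$ in $\overline{\mathbb{F}}_q[x,y]$ satisfies $\deg_y A + \deg_y B = 2$. First I would rule out a factor that is constant in $y$: if $\deg_y A = 0$ then $A \in \overline{\mathbb{F}}_q[x]$ divides $Q$, and since the coefficient of $y^2$ in $Q$ is $1$, the content of $Q$ as a polynomial in $y$ over the UFD $\overline{\mathbb{F}}_q[x]$ is $\gcd(1, -\gamma\prod_i P_i) = 1$; hence $A$ divides a unit and is itself a constant, making the factorization trivial. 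The same applies if $\deg_y B = 0$. Thus a nontrivial factorization forces $\deg_y A = \deg_y B = 1$; the two $y$-leading coefficients multiply to $1$, so after rescaling we may write $Q = (y + c_1(x))(y + c_2(x))$ with $c_1, c_2 \in \overline{\mathbb{F}}_q[x]$. Comparing coefficients of $y$ and of $y^0$ gives $c_1 + c_2 = 0$ and $c_1 c_2 = -\gamma\prod_{i=1}^m P_i(x)$, so $c_1^2 = \gamma\prod_{i=1}^m P_i(x)$. Therefore $Q$ is reducible over $\overline{\mathbb{F}}_q$ if and only if $\gamma\prod_{i=1}^m P_i(x)$ is a square in $\overline{\mathbb{F}}_q[x]$.

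It then remains to show that $\gamma\prod_{i=1}^m P_i(x)$ is not a square. By Lemma~\ref{lem:distinct roots}, each $P_i$ has only simple roots in $\overline{\mathbb{F}}_q$ and distinct $P_i$ and $P_j$ share no common root, so every root of $\prod_i P_i$ in $\overline{\mathbb{F}}_q$ is simple; since $\gamma \neq 0$, the polynomial $\gamma\prod_i P_i$ is squarefree. It has positive degree because each $P_i$ is irreducible (hence $\deg P_i \geq 1$) and $m \geq 1$. A squarefree polynomial of positive degree over the algebraically closed field $\overline{\mathbb{F}}_q$ cannot be a square: writing $\gamma\prod_i P_i = g^2$ would force $\deg g \geq 1$, and any root of $g$ would be a root of $\gamma\prod_i P_i$ of multiplicity at least $2$, contradicting squarefreeness. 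Combined with the previous paragraph, $Q$ admits no nontrivial factorization over $\overline{\mathbb{F}}_q$, i.e., $Q$ is absolutely irreducible.

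I expect the only real friction to be the bookkeeping in the first step: one must carefully exclude factors that are nonconstant in $x$ but constant in $y$ (this is exactly the content/primitivity — Gauss's lemma — observation) and keep track of leading coefficients so that the degree-one factors can be normalized to be monic in $y$. Once the question is reduced to whether $\gamma\prod_i P_i$ is a square, Lemma~\ref{lem:distinct roots} supplies squarefreeness immediately. It is worth noting that the argument uses no hypothesis on the characteristic of $\F_q$: the relations $c_1 = -c_2$ and $c_1^2 = \gamma\prod_i P_i$ hold in any characteristic, and the separability of irreducible polynomials over the perfect field $\F_q$ is already encoded in Lemma~\ref{lem:distinct roots}.
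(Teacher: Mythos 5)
Your proof is correct and follows essentially the same route as the paper's: reduce a hypothetical factorization of $Q$ over $\Fbar_q$ to two monic-in-$y$ linear factors, derive that $\gamma\prod_i P_i$ must be a square in $\Fbar_q[x]$, and contradict this using the squarefreeness supplied by Lemma~\ref{lem:distinct roots}. You are more careful than the paper in one spot: you explicitly rule out factors constant in $y$ via the content/Gauss-lemma observation, whereas the paper's proof tacitly assumes that any nontrivial factorization has both factors monic of degree one in $y$.
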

\begin{proof}
Were $Q(x,y)$ irreducible in $\Fbar_q$, 
there would be polynomials $A$ and $B \in \Fbar_q[x]$ such that 
\begin{subequations}
\begin{align}
Q(x,y) &= (y - A(x)) (y + B(x))\\
&=y^2 + (B(x) - A(x))y - A(x)B(x),
\end{align}
\end{subequations}
which implies that $A = B$. Therefore, 
\begin{equation}
\gamma \prod_{i}^{m}P_i(x) = A^2(x),
\end{equation} 
implying that all roots of $\prod_{i=1}^{m} P_i(x)$ have a multiplicity
at least two. However, since the $P_i$'s are distinct, this contradicts
Lemma~\ref{lem:distinct roots}.
\end{proof}

Somewhat
counterintuitively, while a univariate irreducible polynomial in $\F_q[x]$
does not have zeros in $\F_q$, a multivariate (absolutely) irreducible
polynomial in $\F_q[x_1, \ldots, x_n]$ may have zeros in $\F_q^n$! The
following theorem expresses bounds on the number of zeros of a class of
irreducible polynomials.
\begin{theorem}
\label{thm:number of zeros}
Let $k$ be a positive integer and $P \in \F_q[x]$ be a polynomial with $\ell$
distinct zeros in $\Fbar_q$, for some $\ell \in \mathbb{N}$, such that 
\begin{equation}
R(x,y) = y^k - P(x)
\end{equation}
is absolutely irreducible. Then, the number of zeros of $R$, \emph{i.e.},
$|\mathcal{V}_q(R)|$, satisfies
\begin{equation}
\Big| |\mathcal{V}_q(R)| - q\Big| \leq (k-1)(\ell-1)\sqrt{q}.
\end{equation}
\end{theorem}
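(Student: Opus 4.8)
The plan is to express $|\mathcal{V}_q(R)|$ as a sum of multiplicative character sums and then apply the Weil bound to each of them (equivalently, this is the Hasse--Weil estimate for the superelliptic curve $y^{k} = P(x)$, but the character-sum route avoids any genus computation for non-separable $P$). The first step is to extract from the hypothesis the precise algebraic condition the Weil bound needs, namely that $P$ is nonzero and is not a perfect $d$-th power in $\Fbar_q[x]$ for any integer $d \geq 2$ dividing $k$. I would argue the contrapositive: if $P = H^{d}$ for some $H \in \Fbar_q[x]$ with $d \mid k$ and $d \geq 2$, write $d = p^{b} d'$ with $p = \mathrm{char}(\F_q)$ and $p \nmid d'$; then over $\Fbar_q$ one has either $R(x,y) = \big(y^{k/p^{b}} - H^{d'}\big)^{p^{b}}$ when $b \geq 1$, or $R(x,y) = \prod_{\zeta^{d} = 1}\big(y^{k/d} - \zeta H(x)\big)$ when $b = 0$ (using that $\Fbar_q$ contains $d$ distinct $d$-th roots of unity), and in either case $R$ fails to be absolutely irreducible. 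Let $\alpha_{1}, \dots, \alpha_{\ell} \in \Fbar_q$ be the distinct zeros of $P$.

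For $a \in \F_q$ put $M(a) \triangleq |\{\, y \in \F_q : y^{k} = a \,\}|$, and let $G$ be the group of multiplicative characters $\chi$ of $\F_q^{\ast}$ with $\chi^{k} = \chi_{0}$, where $\chi_{0}$ is the trivial character; then $|G| = \gcd(k, q-1) \leq k$. Extending characters to all of $\F_q$ by the conventions $\chi_{0}(0) \triangleq 1$ and $\chi(0) \triangleq 0$ for $\chi \neq \chi_{0}$, orthogonality of characters gives $M(a) = \sum_{\chi \in G} \chi(a)$ for every $a \in \F_q$ (both sides equal $|G|$ when $a$ is a nonzero $k$-th power, $0$ when $a$ is a nonzero non-$k$-th power, and $1$ when $a = 0$). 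Summing over $x \in \F_q$ and swapping the order of summation,
\[
|\mathcal{V}_q(R)| \;=\; \sum_{x \in \F_q} M\big(P(x)\big) \;=\; \sum_{\chi \in G}\ \sum_{x \in \F_q} \chi\big(P(x)\big),
\]
and because the $\chi_{0}$-summand equals $q$, this yields $|\mathcal{V}_q(R)| - q = \sum_{\chi \in G \setminus \{\chi_{0}\}} \sum_{x \in \F_q} \chi\big(P(x)\big)$.

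It remains to bound each inner sum. Fixing $\chi \in G$ with $\chi \neq \chi_{0}$, its order $d$ satisfies $2 \leq d \mid k$, so by the first step $P$ is not a $d$-th power in $\Fbar_q[x]$; the Weil bound for multiplicative character sums (for instance, Lidl and Niederreiter, \emph{Finite Fields}, Theorem~5.41) then gives $\big|\sum_{x \in \F_q} \chi(P(x))\big| \leq (\ell - 1)\sqrt{q}$, since $P$ has exactly $\ell$ distinct roots (a non-monic leading coefficient only contributes a unit-modulus factor $\chi$ of that coefficient, so one may reduce to the monic case). Summing the at most $|G| - 1 \leq k - 1$ such terms via the triangle inequality gives $\big|\,|\mathcal{V}_q(R)| - q\,\big| \leq (k-1)(\ell - 1)\sqrt{q}$, as desired. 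I expect the main obstacle to be the first step: correctly deducing from absolute irreducibility that $P$ avoids all proper $d$-th powers for $d \mid k$, with due care in characteristic $p \mid k$. The remainder is routine character-sum bookkeeping, the one point to watch being the conventions at $a = 0$, which must be chosen so that the trivial-character term contributes exactly $q$ rather than $q$ minus the number of zeros of $P$ in $\F_q$.
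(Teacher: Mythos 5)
Your proposal is correct, and it fills a genuine gap: the paper does not prove Theorem~\ref{thm:number of zeros} at all, but simply cites Schmidt's \emph{Equations over Finite Fields}~\cite{schmidt2004equations}. Your route — expressing $|\mathcal{V}_q(R)|$ as $\sum_{x\in\F_q} M(P(x))$, expanding $M$ by orthogonality over the subgroup $G$ of characters killed by the $k$-th power map, and then invoking Lidl--Niederreiter Theorem~5.41 termwise — is the standard character-sum proof of this bound, and all the details check out. The convention $\chi_0(0)=1$, $\chi(0)=0$ for $\chi\ne\chi_0$ correctly makes the trivial term contribute exactly $q$ (while still giving $M(0)=1$), and you are right that this is the one place where a careless normalization would break the computation. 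The reduction from ``$R$ absolutely irreducible'' to ``$P$ is not a $d$-th power in $\Fbar_q[x]$ for any $d\mid k$, $d\ge 2$'' is the step most likely to be overlooked, and your contrapositive handles both the wild case ($d=p^b d'$ with $b\ge 1$, giving $R=(y^{k/p^b}-H^{d'})^{p^b}$ via the Frobenius) and the tame case ($p\nmid d$, giving the $d$-fold factorization through the $d$-th roots of unity). Two minor points worth recording if this were written up in full: the hypothesis $\ell\in\mathbb{N}$ (so $\ell\ge 1$) is what guarantees $P$ is nonconstant, which Theorem~5.41 requires; and the count $|G|-1=\gcd(k,q-1)-1\le k-1$ is what produces the factor $(k-1)$ rather than something depending on $q$. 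Compared with the paper's bare citation, your argument is more self-contained (modulo the Weil bound for character sums) and makes visible exactly where each factor in $(k-1)(\ell-1)\sqrt{q}$ comes from; Schmidt's treatment reaches the same estimate but through Stepanov's elementary method, avoiding character sums and the Riemann hypothesis for curves.
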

\begin{proof}
See~\cite[Sec.~2.11]{schmidt2004equations}.
\end{proof}
\begin{corollary}
\label{cor:number of zeros}
Let $Q(x,y)$ and $P_i(x)$, $i \in [m]$, be as defined in
Theorem~\ref{thm:absolute irreducibility} and let
$d = \sum_{i \in [m]} \deg(P_i)$.
Then,
\begin{equation}
\Big| |\mathcal{V}_q(Q)| - q\Big| \leq \left(d - 1\right) \sqrt{q}.
\end{equation}
\end{corollary}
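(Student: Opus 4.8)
The plan is to deduce this directly from Theorem \ref{thm:number of zeros} by identifying the right parameters. Set $k = 2$ and let $P(x) = \gamma \prod_{i=1}^{m} P_i(x)$, so that $Q(x,y) = y^2 - P(x) = R(x,y)$ in the notation of that theorem. Theorem \ref{thm:absolute irreducibility} already guarantees that $Q$ is absolutely irreducible, which is the hypothesis Theorem \ref{thm:number of zeros} requires on $R$. It then remains only to compute $\ell$, the number of distinct zeros of $P$ in $\Fbar_q$, and substitute into the bound $\big| |\mathcal{V}_q(R)| - q \big| \le (k-1)(\ell-1)\sqrt{q} = (\ell - 1)\sqrt{q}$.

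The one substantive point is to verify that $P(x) = \gamma \prod_{i=1}^m P_i(x)$ has exactly $d = \sum_{i \in [m]} \deg(P_i)$ distinct zeros in $\Fbar_q$. First, multiplying by the nonzero constant $\gamma$ does not change the zero set, so we may ignore it. Each monic irreducible $P_i$ has $\deg(P_i)$ roots in $\Fbar_q$, and by the first part of Lemma \ref{lem:distinct roots} these are distinct; by the second part of that lemma, the root sets of distinct $P_i$ and $P_j$ are disjoint. Hence the total number of distinct roots of the product is $\sum_{i \in [m]} \deg(P_i) = d$, i.e.\ $\ell = d$.

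Substituting $\ell = d$ into the inequality of Theorem \ref{thm:number of zeros} yields exactly $\big| |\mathcal{V}_q(Q)| - q \big| \le (d-1)\sqrt{q}$, as claimed. I do not anticipate a real obstacle here: the only place any care is needed is the appeal to Lemma \ref{lem:distinct roots} to rule out repeated or shared roots among the $P_i$, and that is immediate from the lemma as stated. Everything else is bookkeeping in matching the hypotheses of Theorem \ref{thm:number of zeros}.
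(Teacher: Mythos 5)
Your proof is correct and follows the same route as the paper: invoke Theorem~\ref{thm:absolute irreducibility} for absolute irreducibility, use Lemma~\ref{lem:distinct roots} to count $d$ distinct roots of the product, and plug $k=2$, $\ell=d$ into Theorem~\ref{thm:number of zeros}. You simply spell out the bookkeeping (disjointness of root sets, irrelevance of $\gamma$) that the paper leaves implicit.
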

\begin{proof}
From Lemma~\ref{lem:distinct roots}, $\prod_{i=1}^{m} P_i(x)$ has $d$
distinct zeros and from Theorem~\ref{thm:absolute irreducibility}, 
$Q(x,y)$ is absolutely irreducible.  The corollary then follows from
Theorem~\ref{thm:number of zeros}.
\end{proof}

\subsection{Delsarte--Goethals Codes}

The Delsarte--Goethals (DG) codes \cite{DG75}  are a class of low-rate
binary nonlinear codes obtained as a union of carefully
selected cosets of the first-order Reed--Muller code
$RM(1,m)$ in the second-order Reed--Muller code $RM(2,m)$,
where $m$ is even.   The DG code $DG(m,d)$ (for $m=2t+2 \geq 4$)
is a code of length $2^{2t+2}$ containing
$2^{(2t+1)(t-d+2)+2t+3}$ codewords and having minimum
Hamming distance $2^{2t+1}-2^{2t+1-d}$.  The code
$DG(m,1)$ coincides with $RM(2,m)$ (the second-order
Reed--Muller code), while for $2 \leq d \leq t+1$, $DG(m,d)$
is a nonlinear subcode of $RM(2,m)$.
DG codes are well described in the classical
text of MacWilliams and Sloane \cite[Ch.~15]{MS77}.

\section{Binary Shadow Codes}

In this section, we briefly discuss binary shadow codes. For a thorough
rigorous discussion of shadow codes, see~\cite{cherubini2024anew}.

\subsection{Definition and Properties}

Let $k$ be a positive integer and $q$ be a power of an odd prime number.  Let
$\lg: \F_q^\ast \rightarrow \F_2$ be \emph{the} nontrivial homomorphism between
the multiplicative group of $\F_q$ and the additive group of $\F_2$.  Thus, for
any primitive $\alpha \in \F_q$ and for $k \in \mathbb{Z}$ we have
$\lg(\alpha^k) = k \mod 2$.  Note that a field element $\beta \in \F_q$ is a
zero of $\lg$ if and only if $\beta$ is a nonzero perfect square.

Let $\mathcal{E}$, called the \emph{evaluation set}, be any non-empty subset of
$\F_q$ and $\twocal{N}{E} \in \F_q[x]$ be the set of all polynomials which do
not vanish at any point of $\mathcal{E}$, \emph{i.e.}, 
\begin{equation}
\twocal{N}{E} \triangleq \left\{P \in \F_q[x]: P(\beta) \neq 0,
\forall \beta \in \mathcal{E}\right\}.
\end{equation}
Clearly, all irreducible polynomials of degree $\neq 1$ belong to
$\twocal{N}{E}$; therefore, $\twocal{N}{E} \neq \{\}$.  Moreover, one may see
that $\twocal{N}{E}$ forms a monoid under polynomial multiplication with the
zero-degree polynomial $1$ as its identity.

Let $\mathcal{B} \subset \twocal{N}{E}$ be any finite set such that (i) all
non-constant polynomials in $\mathcal{B}$ are monic and irreducible, and (ii)
$\mathcal{B}$ contains at most one polynomial of degree zero and
that polynomial, if contained in $\mathcal{B}$,
takes the form $P(x) = \alpha$, for a primitive $\alpha \in \F_q$.
We refer to the elements of $\mathcal{B}$ as
\emph{basic} polynomials.  Since $\F_q[x]$ is a \emph{unique factorization
domain}, each polynomial in $\F_q[x]$ can be factored uniquely as a product of
monic irreducible polynomials, multiplied by some integer power of $\alpha$.
Let $\twocal{S}{B}$ denote the set of all
polynomials whose factorization as the product of monic irreducible polynomials
has only basic factors, \emph{i.e.}, 
\begin{equation}
\twocal{S}{B} \triangleq \left\{\prod_{P \in \mathcal{B}} P(x)^{n_P} :
n_P \in \mathbb{N} \cup \{0\}\right\}.
\label{eq:sb}
\end{equation} 
Note that $\twocal{S}{B}$ is a submonoid of $\twocal{N}{E}$.

Let $\Lambda_{\mathcal{E}}: \twocal{N}{E} \to \F_2^{|\mathcal{E}|}$ be a
function defined as
\begin{equation} 
\Lambda_{\mathcal{E}}(P) \triangleq \Big( \lg\big(P(\beta)\big) : \beta \in \mathcal{E}\Big), 
\end{equation}
for any $P(x) \in \twocal{N}{E}$. Note that for any two polynomials $P(x)$
and $Q(x) \in \twocal{N}{E}$ we have $\Lambda_{\mathcal{E}}(P
Q) = \Lambda_{\mathcal{E}}(P) + \Lambda_{\mathcal{E}}(Q)$; thus,
$\Lambda_{\mathcal{E}}$ is a monoid homomorphism.  

\begin{definition}
The binary shadow code $C(\mathcal{E}, \mathcal{B})$ is the image of the monoid
$\twocal{S}{B}$ under homomorphism $\Lambda_{\mathcal{E}}$, \emph{i.e.}, 
\begin{equation}
C(\mathcal{E}, \mathcal{B}) \triangleq \Lambda_\mathcal{E}(\twocal{S}{B}).
\end{equation}
\end{definition}

For any evaluation set $\mathcal{E}$ and any set $\mathcal{B}$ of basic
polynomials, let 
\begin{equation}
\Delta(\mathcal{E}, \mathcal{B}) \triangleq |\mathcal{E}| - \frac{q}{2} -
\frac{\sqrt{q}}{2} (d_{\mathcal{B}} - 1), 
\end{equation}
where $d_{\mathcal{B}} \triangleq \deg\left(\prod_{P \in \mathcal{B}}
P\right)$ is called the \emph{total degree}.   The following theorem provides some
fundamental properties of $C(\mathcal{E}, \mathcal{B})$.

\begin{theorem}[Properties of a binary shadow code]
\mbox{}
\label{thm:shadow code properties}
\begin{enumerate}[label=(\alph*)]
\item $C(\mathcal{E}, \mathcal{B})$ is a linear code of length $|\mathcal{E}|$.
\item $C(\mathcal{E}, \mathcal{B}) = \text{span}(\Lambda_{\mathcal{E}}(\mathcal{B}))$. 
\item If $\Delta(\mathcal{E}, \mathcal{B}) > 0$ then $\dim \left(C(\mathcal{E},
\mathcal{B}) \right) = |\mathcal{B}|$.
\item  $\dmin(C(\mathcal{E},\mathcal{B})) \geq \Delta(\mathcal{E},\mathcal{B})$.
\label{thm:dmin}
\end{enumerate}
\end{theorem}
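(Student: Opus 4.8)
The plan is to prove the four parts essentially in the order stated, since each builds on the previous one. For part (a), I would observe that $\twocal{S}{B}$ is the set of all products of basic polynomials with nonnegative integer exponents, and that $\Lambda_{\mathcal{E}}$ is a monoid homomorphism into the additive group $\F_2^{|\mathcal{E}|}$. Because $\lg$ is a homomorphism onto $\F_2$, and because for a basic polynomial $P$ we have $\Lambda_{\mathcal{E}}(P^{n_P}) = n_P \Lambda_{\mathcal{E}}(P) = (n_P \bmod 2)\,\Lambda_{\mathcal{E}}(P)$, the image of $\twocal{S}{B}$ is exactly the set of all $\F_2$-linear combinations $\sum_{P \in \mathcal{B}} \epsilon_P \Lambda_{\mathcal{E}}(P)$ with $\epsilon_P \in \F_2$. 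Hence the image is a linear subspace of $\F_2^{|\mathcal{E}|}$; it has length $|\mathcal{E}|$ by construction. This same computation immediately yields part (b): $C(\mathcal{E},\mathcal{B}) = \mathrm{span}_{\F_2}(\Lambda_{\mathcal{E}}(\mathcal{B}))$.

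For parts (c) and (d) the key quantitative input is Corollary~\ref{cor:number of zeros} together with the observation linking $\lg$ to squares: $\lg(\beta) = 0$ iff $\beta$ is a nonzero square. The central lemma I would extract is the following. Fix any nonempty subset $S \subseteq \mathcal{B}$ and let $P_S(x) = \prod_{P \in S} P(x)$, treating the degree-zero basic polynomial $\alpha$, if it lies in $S$, as contributing a factor $\alpha$ (a non-square). Then the number of $\beta \in \mathcal{E}$ with $\lg(P_S(\beta)) = 1$ — equivalently, with $P_S(\beta)$ a non-square — is bounded below. I would count the $\beta \in \F_q$ with $\gamma P_S(\beta)$ a nonzero square (choosing $\gamma = 1$ or $\gamma = \alpha$ appropriately to absorb the degree-zero factor) by relating them to $\F_q$-points of $Q(x,y) = y^2 - \gamma P_S(x)$: each such $\beta$ gives exactly two points $(\beta, \pm y)$, each $\beta$ with $\gamma P_S(\beta)$ a non-square gives none, and the (at most $d_{\mathcal{B}}$) roots of $P_S$ give one point each. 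Combining with Corollary~\ref{cor:number of zeros}'s estimate $\big||\mathcal{V}_q(Q)| - q\big| \le (d_S - 1)\sqrt{q}$ where $d_S = \deg P_S \le d_{\mathcal{B}}$, one gets that the number of non-square values of $P_S$ on all of $\F_q$ is at least $\tfrac{q}{2} - \tfrac{\sqrt q}{2}(d_{\mathcal{B}}-1)$ (after handling the roots and the sign of $\gamma$, and using $d_S \le d_{\mathcal{B}}$ to weaken the bound uniformly). Since $\mathcal{E}$ omits at most $q - |\mathcal{E}|$ points of $\F_q$, the number of $\beta \in \mathcal{E}$ with $\lg(P_S(\beta)) = 1$ is at least $|\mathcal{E}| - \tfrac{q}{2} - \tfrac{\sqrt q}{2}(d_{\mathcal{B}}-1) = \Delta(\mathcal{E},\mathcal{B})$.

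With this lemma in hand, both (c) and (d) follow. A nonzero codeword of $C(\mathcal{E},\mathcal{B})$ has the form $\Lambda_{\mathcal{E}}(P_S)$ for some nonempty $S \subseteq \mathcal{B}$ (nonzero meaning the corresponding polynomial is not a perfect square times a power of $\alpha$ that is itself a square on all of $\mathcal{E}$; here one uses that $\Lambda_{\mathcal{E}}(P_S)$ is the sum over $S$ of the generators). Its weight $\wt(\Lambda_{\mathcal{E}}(P_S))$ is exactly the number of $\beta \in \mathcal{E}$ with $\lg(P_S(\beta)) = 1$, which by the lemma is $\ge \Delta(\mathcal{E},\mathcal{B})$; this proves (d), since $\dmin$ of a linear code equals its minimum nonzero weight. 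For (c), if $\Delta(\mathcal{E},\mathcal{B}) > 0$ then every nonempty sub-product $P_S$ has $\Lambda_{\mathcal{E}}(P_S) \ne 0$, so no nontrivial $\F_2$-linear combination of the $|\mathcal{B}|$ generators $\Lambda_{\mathcal{E}}(P)$ vanishes; hence they are linearly independent and $\dim C(\mathcal{E},\mathcal{B}) = |\mathcal{B}|$.

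The main obstacle is the bookkeeping in the counting lemma: correctly accounting for the at-most-$d_{\mathcal{B}}$ zeros of $P_S$ in $\mathcal{E}$ (which contribute $0$, not $1$, to the weight and are not counted by the square/non-square dichotomy), handling the degree-zero basic polynomial $\alpha$ via the auxiliary constant $\gamma$, and checking that the chosen sign of $\gamma$ yields a lower bound rather than an upper bound on the number of non-squares. One must also verify that using $d_S \le d_{\mathcal{B}}$ in place of $d_S$ only weakens the estimate, so that a single bound $\Delta(\mathcal{E},\mathcal{B})$ works uniformly over all nonempty $S$. None of these steps is deep, but they must be assembled carefully to land exactly on the stated quantity $\Delta(\mathcal{E},\mathcal{B})$.
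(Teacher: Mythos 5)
Your proposal matches the paper's proof in all essentials: parts (a) and (b) via the monoid-homomorphism structure, and parts (c) and (d) by relating the zero coordinates of a codeword $\Lambda_{\mathcal{E}}(P_S)$ to $\F_q$-points of the absolutely irreducible curve $y^2 = \gamma P_S(x)$ and applying Corollary~\ref{cor:number of zeros}. The one place your bookkeeping differs is that you first lower-bound the number of non-square values over \emph{all} of $\F_q$ and then subtract the complement of $\mathcal{E}$; this introduces a correction of $N_0/2$ (for $N_0$ the number of $\F_q$-roots of $P_S$, which lie in $\F_q\setminus\mathcal{E}$) in the intermediate lemma, and this correction is only recovered because those same $N_0$ points need not be subtracted at the final step. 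The paper sidesteps this entirely by directly upper-bounding $2m_{\bm v} \le |\mathcal{V}_q(Q)|$ where $m_{\bm v}$ is the number of zero coordinates \emph{within} $\mathcal{E}$, which is the cleaner route and one you may prefer when writing this up.
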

\begin{proof}
\begin{enumerate}[label=(\alph*)]
\item $\twocal{S}{B}$ is a submonoid of $\twocal{N}{E}$ and
$\Lambda_{\mathcal{E}}$ is a monoid homomorphism on $\twocal{N}{E}$. Therefore,
$C(\mathcal{E}, \mathcal{B})$ is closed under binary addition of its elements. 
\item Follows from the definition of $\twocal{S}{B}$ in (\ref{eq:sb}). 
\item We show that if $\Delta(\mathcal{E}, \mathcal{B}) > 0$ then the vectors
in the spanning set $\Lambda_{\mathcal{E}}(\mathcal{B})$ are linearly
independent.  Had they been linearly dependent, then there would be a nonzero
binary vector $\bm{b} = (b_P: P \in \mathcal{B}) \in \F_2^{|\mathcal{B}|}$ such
that $\sum_{P \in \mathcal{B}} b_P \Lambda_{\mathcal{E}}(P) = \bm{0}$, where
$\bm{0}$ denotes the zero vector. Therefore, the polynomial $A(x) = \prod_{P
\in \mathcal{B}} P(x)^{b_P}$ evaluates to nonzero perfect squares at all points
of $\mathcal{E}$.  Assume that for an $\eta \in \mathcal{E}$, $A(\eta) =
\beta^2$ for some $\beta \in \F_q^\ast$.  Then, the polynomial $Q(x,y) = y^2 -
A(x)$ has zeros at $(\eta, \beta)$ and $(\eta, -\beta)$.  Indeed, each
evaluation point contributes twice in the zeros of $Q$.  As a result, $Q$ has
at least $2|\mathcal{E}|$ roots in $\F_q^2$, \emph{i.e.}, 
\begin{equation}
|\mathcal{V}_q(Q)| \geq 2 |\mathcal{E}|.
\label{eq:VQ lower bound}
\end{equation}
Due to Theorem~\ref{thm:absolute irreducibility}, $Q$ is absolutely
irreducible; thus, Corollary~\ref{cor:number of zeros} implies that  
\begin{equation}
|\mathcal{V}_q(Q)| \leq q + (d_{\mathcal{B}} - 1) \sqrt{q}.
\label{eq:VQ upper bound}
\end{equation}
From (\ref{eq:VQ lower bound}) and (\ref{eq:VQ upper bound}) one concludes that
$\Delta(\mathcal{E}, \mathcal{B}) \leq 0$ which is a contradiction. Thus,
the vectors in the spanning set $\Lambda_{\mathcal{E}}(\mathcal{B})$ are
linearly independent and $\dim(C(\mathcal{E}, \mathcal{B})) = |\mathcal{B}|$.
\item For any nonzero codeword $\bm{v}$, there is a nonzero binary vector
$\bm{b} = (b_P: P \in \mathcal{B}) \in \F_2^{|\mathcal{B}|}$ such that $\bm{v}
= \sum_{P \in \mathcal{B}} b_P \Lambda_{\mathcal{E}}(P)$.  Let $m_{\bm{v}}$
denote the number of zero entries of $\bm{v}$. Using similar arguments as in
the proof of part (c), we see that the absolutely irreducible polynomial
\begin{equation}
Q(x,y) = y^2 - \prod_{P \in \mathcal{B}} P(x)^{b_P}
\label{eq:Qxy}
\end{equation}
 has at least
$2m_{\bm{v}}$ zeros in $\F_q^2$. Therefore, (\ref{eq:VQ upper bound}) implies
that
\begin{equation}
m_{\bm{v}} \leq \frac{q}{2} + \frac{\sqrt{q}}{2} (d_{\mathcal{B}} - 1).
\label{eq:nv upper bound}
\end{equation}
Note that $\wt(\bm{v}) = |\mathcal{E}| - m_{\bm{v}}$; therefore, 
for any nonzero codeword $\bm{v}$, $\wt(\bm{v}) \geq \Delta(\mathcal{E}, \mathcal{B})$.\qedhere
\end{enumerate}
\end{proof}

\subsection{Shadow Codes of Degree 2}

Cherubini and Micheli in~\cite{cherubini2024anew} suggest using
a set $\mathcal{B}_2$ of degree-2 irreducible basic polynomials.
We refer to that scheme as a \emph{shadow code
of degree 2}. Since there is no linear basic polynomial in a shadow code of
degree 2, we may allow $\mathcal{E} = \F_q$.  In that case, the length of
the code is $n = q$, $d_{\mathcal{B}_2} = 2 |\mathcal{B}_2|$, and
if $\Delta(\F_q,
\mathcal{B}_2) > 0$ then the dimension of the code is $k = |\mathcal{B}_2|$.
Therefore,
$\dmin(C(\mathbb{F}_q,\mathcal{B}_2))$
satisfies 
\begin{equation}
\dmin(C(\mathbb{F}_q,\mathcal{B}_2))
 \geq \Delta(\F_q, \mathcal{B}_2)
 = \frac{n}{2} - \frac{\sqrt{n}}{2} (2k - 1)
\end{equation}
and the positivity of $\Delta(\F_q, \mathcal{B}_2)$ implies that 
\begin{equation}
k \leq \left\lceil \frac{\sqrt{n} - 1}{2}\right\rceil.
\end{equation}

\section{Shadow Codes of Degree at Most 1}

Corollary~\ref{cor:number of zeros} suggests that a smaller total degree
$d_{\mathcal{B}}$ tightens the lower bound on
$\dmin(C(\mathcal{E},\mathcal{B}))$, as discussed in
Theorem~\ref{thm:number of zeros}-\ref{thm:dmin}. Therefore, by choosing the
basic polynomials among linear irreducible polynomials we may get a larger
lower bound on minimum distance.

For any evaluation set $\mathcal{E} \subset \F_q$, let the set of basic
polynomials be selected as 
\begin{equation} \mathcal{B}_1 = \left\{(x - \lambda):
\lambda \in \F_q \setminus \mathcal{E}\right\} \cup \{\alpha\}.  
\end{equation}
In this case, we refer to  the code $C(\mathcal{E}, \mathcal{B}_1)$ as a
\emph{shadow code of degree at most 1}. The inclusion of $\alpha$ in $\mathcal{B}_1$ not
only increases the dimension by one, but it means
that the all-ones word is a codeword (thus the complement of every
codeword is a codeword).

Assume that for a positive $\gamma < 1$,  $|\mathcal{E}| = \gamma q$ and, as a
result, $|\mathcal{B}_1| = (1-\gamma) q + 1$. Therefore, the length of the code is 
$n = \gamma q$, the total degree is $d_{\mathcal{B}_1} = (1 - \gamma)q$, and if 
$\Delta(\mathcal{E}, \mathcal{B}_1) > 0$ then the dimension of the code is 
$k = (1 - \gamma) q + 1$. In such a case, one may see that
$q = n + k -1$ and $\gamma = n/(n+k-1)$.  We then have
\begin{equation}
\dmin(C(\mathcal{E},\mathcal{B}_1)) 
\geq \Delta(\mathcal{E}, \mathcal{B}_1) = \frac{n - k + 1}{2}
- \frac{\sqrt{n + k - 1}}{2}(k-2).
\label{eq:shadow min distance}
\end{equation}
The following theorem expresses a bound on $k$ which guarantees the positivity of 
$\Delta(\mathcal{E}, \mathcal{B}_1)$.
\begin{theorem}
\label{thm:k sqrt n}
Consider $C(\mathcal{E}, \mathcal{B}_1)$, a shadow code of degree at most
$1$ and length $n \geq 3$.  If $k \leq \sqrt{n} + 0.5$ then $\Delta(\mathcal{E},
\mathcal{B}_1) > 0$.
\end{theorem}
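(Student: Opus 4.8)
The plan is to work directly with the closed form \eqref{eq:shadow min distance}: since
\[
\Delta(\mathcal{E},\mathcal{B}_1) = \tfrac12\Bigl[(n-k+1) - \sqrt{n+k-1}\,(k-2)\Bigr],
\]
it suffices to establish the inequality $n-k+1 > \sqrt{n+k-1}\,(k-2)$ under the hypotheses $n \geq 3$ and $k \leq \sqrt n + \tfrac12$. I would split on the sign of the factor $k-2$. In the range $k \leq 2$ the right-hand side is non-positive (as $k-2 \leq 0$ and $n+k-1 \geq 2 > 0$), whereas the left-hand side satisfies $n-k+1 \geq n-1 \geq 2 > 0$; this is the only place the assumption $n \geq 3$ is invoked.

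For the main range $k > 2$, the idea is to replace the awkward quantities $\sqrt{n+k-1}$ and $k-2$ by clean upper bounds in terms of $\sqrt n$ and then compare two quadratics in $\sqrt n$. From $k \leq \sqrt n + \tfrac12$ one gets $n+k-1 \leq n + \sqrt n - \tfrac12 < \bigl(\sqrt n + \tfrac12\bigr)^2$, hence $\sqrt{n+k-1} < \sqrt n + \tfrac12$; likewise $0 < k-2 \leq \sqrt n - \tfrac32$, where positivity of the right-hand side follows from $k > 2$. Multiplying these bounds between non-negative quantities yields
\[
\sqrt{n+k-1}\,(k-2) < \bigl(\sqrt n + \tfrac12\bigr)\bigl(\sqrt n - \tfrac32\bigr) = n - \sqrt n - \tfrac34 .
\]
Finally, $k \leq \sqrt n + \tfrac12$ also gives $n - \sqrt n - \tfrac34 \leq n-k+1$ (this reduces to $k \leq \sqrt n + \tfrac74$), and chaining the last two inequalities produces $\sqrt{n+k-1}\,(k-2) < n-k+1$, i.e.\ $\Delta(\mathcal{E},\mathcal{B}_1) > 0$.

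The computation is entirely elementary, so there is no genuine obstacle beyond book-keeping; the two points requiring care are (a) the case split at $k=2$, where the factor $k-2$ changes sign, and (b) verifying that the quantities being multiplied are non-negative so the product inequality points the right way — in particular that $k > 2$ together with $k \leq \sqrt n + \tfrac12$ really forces $\sqrt n > \tfrac32$. As an alternative to the bound-and-compare step, one could instead square the target inequality (legitimate once both sides are positive) and check that the resulting quadratic in $n$ is non-negative for $n \geq (k-\tfrac12)^2$, but the argument above avoids even that.
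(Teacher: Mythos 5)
Your argument is correct, and it takes a genuinely different route from the paper's. The paper's proof passes to the cubic $S(n,k) = k^3 + (n-6)k^2 + (10-2n)k + (2n-5-n^2)$ (obtained by squaring the inequality defining $\Delta > 0$), locates its positive real zero $k_0(n)$ via Cardano's formula, and then asserts — supported only by a plot — that $\sqrt{n} + 0.5$ is a "very tight lower bound" on $k_0(n)$. That last step is not actually proved, so the paper's argument is, strictly speaking, incomplete. By contrast, your bound-and-compare argument operates directly on $(n-k+1) - \sqrt{n+k-1}\,(k-2)$: the case split at $k=2$ handles the sign of $k-2$, and for $k>2$ the two clean estimates $\sqrt{n+k-1} < \sqrt{n}+\tfrac12$ and $0 < k-2 \le \sqrt{n}-\tfrac32$ (both strictly positive because $\sqrt{n} > \tfrac32$ follows from $2 < k \le \sqrt{n}+\tfrac12$) combine to give $\sqrt{n+k-1}(k-2) < n - \sqrt{n} - \tfrac34 \le n-k+1$. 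This is an unconditionally correct elementary chain of inequalities, and it avoids the squaring step and the appeal to Cardano entirely. The one small caveat is that multiplying a strict and a non-strict inequality among positive quantities yields a strict one, which you correctly rely on; and the assumption $n \ge 3$ enters both in your $k \le 2$ branch (via $n-k+1 \ge n-1 \ge 2$) and implicitly in the $k>2$ branch (where $\sqrt{n} > \tfrac32$ forces $n \ge 3$ for integer $n$). Overall your proof is cleaner and, unlike the paper's, is actually complete.
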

\begin{proof}
Let the polynomial $S(n,k)$ be defined as 
\begin{equation}
S(n,k) \triangleq k^3 + (n-6)k^2 + (10-2n)k + (2n - 5 - n^2).
\end{equation} 
One
may see that $\Delta(\mathcal{E}, \mathcal{B}_1) > 0$ if and only if
$S(n,k) < 0$. Note that for a fixed $n$, $S(n,k)$ is a cubic polynomial in $k$.
Moreover, $S(n,0) < 0$ and $S(n,n) > 0$ for $n \geq 3$. Thus,
for a fixed $n$, $S(n,k)$ has a positive real zero $k_0(n) \in (0,n)$.
Using \emph{Cardano's method}~\cite[Sec.~14.7]{dummit2004abstract}, 
one may see that 
\begin{equation}
k_0(n) = \sqrt[3]{\xi(n) + \omega(n)} + \sqrt[3]{\xi(n) - \omega(n)}
-\frac{n-6}{3}
\label{eq:k0n}
\end{equation}
where
\begin{equation}
\xi(n) = \frac{-2n^3 + 45n^2 - 72n + 27}{54},
\end{equation}
and 
\begin{equation}
\omega(n) = \frac{(n-1)\sqrt{-12n^3 + 177n^2 - 174 n - 15}}{18}.
\end{equation}
Note that although $\omega(n)$ might assume complex values, $k_0(n)$ is a
positive real number.  Thus, for $n \geq 3$, $S(k,n) < 0$ if $k < k_0(n)$.
A very tight lower bound on $k_0(n)$ which might even be considered as an
approximation for $k_0(n)$ is $\sqrt{n} + 0.5$.
Fig.~\ref{fig:k0n} shows the $k_0(n)$ function and its approximation.
\end{proof}

\begin{figure}
\centering
\includegraphics[scale=0.6666666]{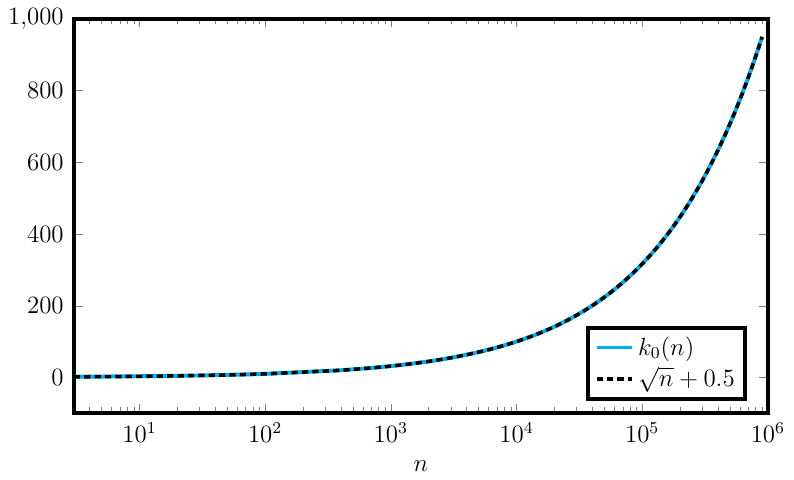}
\caption{$k_0(n)$, the upper bound on the code dimension in a shadow code 
of degree at most 1 to have 
$\Delta(\mathcal{E}, \mathcal{B}_1) > 0$, and its approximation.}
\label{fig:k0n}
\end{figure}

\section{RS-RM Concatenation}

Fix a positive integer $m$.
A natural way to construct a low-rate binary coding scheme is to
concatenate an outer $(N,K)$ Reed--Solomon (RS) code
over $\F_{2^{m+1}}$ with an
inner binary $(2^m,m+1)$ first-order Reed--Muller (RM) code.
As shown in Fig.~\ref{fig:RSRM},
each of the $N$ symbols of an RS codeword
is mapped to a single RM codeword of length $2^m$, resulting in a binary
codeword of length $N 2^m$.

\begin{figure}[htbp]
\centering
\includegraphics{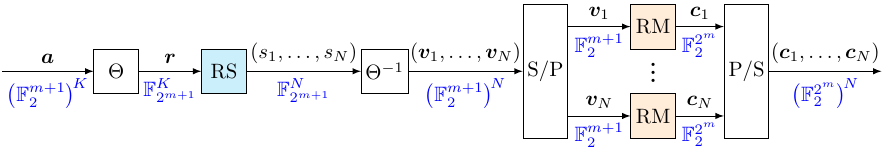}
\caption{Encoder for a concatenated RS-RM code. Above each
arrow is a vector belonging to the vector space written beneath the arrow.
The block labeled RS is a linear RS encoder, and the blocks
labeled RM are linear RM encoders.
The S/P and P/S units are serial-to-parallel and parallel-to-serial converters, respectively.}
\label{fig:RSRM}
\end{figure}

More precisely,
recall that
$\mathbb{F}_{2^{m+1}}$ is an $(m+1)$-dimensional vector space over
$\mathbb{F}_2$.  Let $\Theta: \mathbb{F}_{2}^{m+1} \to \mathbb{F}_{2^{m+1}}$
be any vector space isomorphism, and let
$\Theta^{-1}: \mathbb{F}_{2^{m+1}} \to \mathbb{F}_{2}^{m+1}$ be its
inverse.  For example,
if $\alpha$ is a primitive element of $\mathbb{F}_{2^{m+1}}$,
the map $\Theta$ taking a vector $\bm{v} = (v_1,\ldots,v_{m+1}) \in \mathbb{F}_2^{m+1}$
to
$\Theta(\bm{v}) = \sum_{i=1}^{m+1} v_i \alpha^{i-1} \in
\mathbb{F}_{2^{m+1}}$ is such an isomorphism.
We refer to $\bm{v}$ as the \emph{coordinate vector} of $\Theta(\bm{v})$.

For any $N \in [2^{m+1}]$ and any $K \in [N]$, let $\mathcal{C}_{\RS}(N,K)$ be an
RS code of length $N$, dimension $K$, and minimum Hamming distance $N-K+1$ 
over $\mathbb{F}_{2^{m+1}}$.  For any message
vector $\bm{r} \in \mathbb{F}_{2^{m+1}}^K$, let $\bm{s} = (s_1, \ldots, s_N)
\in \mathcal{C}_{\RS}(N,K)$ denote its corresponding codeword.  We
assume that the RS encoding map is linear.
Let $\bm{v}_i  = \Theta^{-1}(s_i) \in
\mathbb{F}_2^{m+1}$ denote the coordinate vector of $s_i \in
\mathbb{F}_{2^{m+1}}$, for $i \in [N]$.
Each coordinate vector $\bm{v}_i$ serves as
as a message vector for RM$(1,m)$, a binary first-order RM code of length
$2^m$, dimension $m+1$, and minimum distance $2^{m-1}$.
We assume that the RM encoding map is linear.
We denote the 
overall concatenated code by $\mathcal{C}_{\con}(N,K,m)$.

\begin{theorem}
\label{thm:rmrs}
$\mathcal{C}_{\con}(N,K,m)$ is a binary linear $(N2^m, K(m+1))$ code with 
\begin{equation}
\dmin(\mathcal{C}_{\con}(N,K,m)) \geq (N-K+1) 2^{m-1}.
\label{eq:dmin concatenate}
\end{equation}
\end{theorem}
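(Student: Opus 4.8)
The plan is to verify the three claimed parameters of $\mathcal{C}_{\con}(N,K,m)$ — that it is binary, that it is linear, and that it has length $N2^m$ and dimension $K(m+1)$ — and then to establish the minimum-distance bound \eqref{eq:dmin concatenate}. The first three are essentially bookkeeping: the code is binary because the RM inner code is binary; it is linear because the overall encoding map is a composition of the linear RS encoder, the $\mathbb{F}_2$-linear isomorphisms $\Theta^{-1}$ applied coordinatewise (an $\mathbb{F}_2$-linear map since $\Theta$ is a vector-space isomorphism), and the linear RM encoders, hence $\mathbb{F}_2$-linear overall; and the length is $N \cdot 2^m$ since each of the $N$ RS symbols produces one RM codeword of length $2^m$. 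For the dimension, I would argue that the encoding map $\mathbb{F}_{2^{m+1}}^K \cong \mathbb{F}_2^{K(m+1)} \to \mathbb{F}_2^{N2^m}$ is injective: the RS encoder is injective on messages, $\Theta^{-1}$ applied coordinatewise is a bijection onto $\mathbb{F}_2^{N(m+1)}$, and the RM$(1,m)$ encoder is injective (it has dimension $m+1$), so the composition is injective and the image has dimension $K(m+1)$.

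For the distance bound, I would use linearity to reduce to bounding the weight of an arbitrary nonzero codeword $\bm{c}$. Such a $\bm{c}$ arises from a nonzero RS codeword $\bm{s} = (s_1,\ldots,s_N) \in \mathcal{C}_{\RS}(N,K)$, which has Hamming weight at least $N-K+1$ over $\mathbb{F}_{2^{m+1}}$ by the Singleton bound (RS codes are MDS). Let $I = \{i : s_i \neq 0\}$, so $|I| \geq N-K+1$. For each $i \in I$, the coordinate vector $\bm{v}_i = \Theta^{-1}(s_i)$ is a nonzero element of $\mathbb{F}_2^{m+1}$ (since $\Theta^{-1}$ is a bijection), and therefore its image under the RM$(1,m)$ encoder is a nonzero codeword of RM$(1,m)$, which has weight at least the minimum distance $2^{m-1}$. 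The $N$ blocks of the concatenated codeword occupy disjoint coordinate ranges, so $\wt(\bm{c}) = \sum_{i=1}^N \wt(\text{RM-block } i) \geq \sum_{i \in I} 2^{m-1} \geq (N-K+1)2^{m-1}$.

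I do not anticipate a genuine obstacle here — this is the standard product-of-distances bound for concatenated codes — but the one point that requires a moment's care is the claim that a nonzero RS symbol maps, via $\Theta^{-1}$ followed by RM encoding, to a \emph{nonzero} RM codeword of weight $\geq 2^{m-1}$: this relies on $\Theta^{-1}$ being injective (so nonzero maps to nonzero) and on the RM$(1,m)$ encoder being injective, which holds because $\mathrm{RM}(1,m)$ is a code of dimension exactly $m+1$ matching the message length. One should also note that the bound is only a lower bound (hence the $\geq$), since it is conceivable that the RS codeword of minimum weight happens to produce RM symbols of weight strictly exceeding $2^{m-1}$, or that heavier RS codewords could in principle produce lighter concatenated words — but the block-disjointness and the per-block lower bound together rule the latter out, giving the stated inequality for every nonzero codeword and hence for $\dmin$.
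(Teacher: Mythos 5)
Your proof is correct and follows essentially the same route as the paper: the weight of each nonzero concatenated codeword is bounded below by multiplying the RS codeword weight (at least $N-K+1$) by the per-block RM weight (at least $2^{m-1}$), using that $\Theta^{-1}$ and the RM encoder send nonzero symbols to nonzero codewords. Your treatment of the dimension via injectivity of the composed encoding map is slightly more explicit than the paper's one-line remark, but the argument is the same.
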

\begin{proof}
The mapping between elements of $\mathbb{F}_{2^{m+1}}$ and vectors over
$\mathbb{F}_2^{m+1}$ is linear. As a result
the overall code is linear. The dimension and the length of the overall code 
are clear due to the described structure. 

Let $\bm{s}$ be a nonzero codeword of $\mathcal{C}_{\RS}(N,K)$.  
Since $\Theta$ is an isomorphism,
the zero of
$\mathbb{F}_{2^{m+1}}$ has the all-zero coordinate vector over $\mathbb{F}_2$.
Thus, each nonzero entry of $\bm{s}$, say $s_i \in
\mathbb{F}_{2^{m+1}}$, will be mapped to a nonzero RM codeword
$\bm{c}_i$. As a result, the corresponding codeword $\bm{c} =
(\bm{c_1}, \ldots, \bm{c}_N)$ has a weight 
\[\wt(\bm{c}) \geq \wt(\bm{s}) \dmin(\RM(1,m)) \geq (N-K+1) 2^{m-1},\]
which implies (\ref{eq:dmin concatenate}). 
\end{proof}
\begin{corollary}
Let $R_{\RS} =
K/N$ denote the rate of $\mathcal{C}_{\RS}(N,K)$. Then, the rate $R_{\con}$ and the
relative distance $\delta_{\con}$ of $\mathcal{C}(N,K,m)$ satisfy
\begin{equation}
R_{\con} = R_{\RS} \frac{m+1}{2^m},
\label{eq:rcon}
\end{equation}
and 
\begin{equation}
\delta_{\con} \geq \frac{1 - R_{\RS}}{2}.
\label{eq:con relative distance}
\end{equation}
\end{corollary}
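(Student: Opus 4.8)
The plan is to unwind the definitions of rate and relative distance and substitute the parameters supplied by Theorem~\ref{thm:rmrs}. First I would recall from that theorem that $\mathcal{C}_{\con}(N,K,m)$ is a binary linear code of length $N2^m$ and dimension $K(m+1)$. By definition the rate is the dimension divided by the length, so
\[
R_{\con} = \frac{K(m+1)}{N2^m} = \frac{K}{N}\cdot\frac{m+1}{2^m} = R_{\RS}\,\frac{m+1}{2^m},
\]
which is exactly \eqref{eq:rcon}. This first step is a pure rewriting and needs no further justification.

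Next I would turn to the relative distance $\delta_{\con}$, defined as $\dmin(\mathcal{C}_{\con}(N,K,m))$ divided by the block length $N2^m$. Invoking the minimum-distance bound \eqref{eq:dmin concatenate} from Theorem~\ref{thm:rmrs} gives
\[
\delta_{\con} = \frac{\dmin(\mathcal{C}_{\con}(N,K,m))}{N2^m}
\geq \frac{(N-K+1)\,2^{m-1}}{N\,2^m}
= \frac{N-K+1}{2N}
= \frac{1-R_{\RS}}{2} + \frac{1}{2N}.
\]
Since $1/(2N) > 0$, dropping this term yields the claimed inequality \eqref{eq:con relative distance}.

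I do not anticipate any real obstacle here: the corollary is a direct arithmetic consequence of Theorem~\ref{thm:rmrs}, and the only mild subtlety is recognizing that the ``$+1$'' appearing in the RS minimum distance $N-K+1$ contributes the harmless surplus $1/(2N)$, which one simply discards in order to state a clean bound that does not depend on $N$. If one wished, the exact identity $\delta_{\con} \geq (1-R_{\RS})/2 + 1/(2N)$ could be kept instead, but the stated form is cleaner and suffices for the subsequent comparison with shadow codes.
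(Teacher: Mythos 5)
Your proof is correct and takes essentially the same route as the paper: substitute the length, dimension, and minimum-distance bound from Theorem~\ref{thm:rmrs} into the definitions of rate and relative distance, then drop the $1/(2N)$ surplus to obtain the clean bound. The paper phrases the last step as $\frac{N-K+1}{2N} \geq \frac{N-K}{2N}$ rather than splitting off $\frac{1}{2N}$ explicitly, but that is the same observation.
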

\begin{proof}
The rate is clear. For the relative distance we have 
\[\delta_{\con} \geq 
\frac{(N-K+1) 2^{m-1}}{N2^m} = \frac{N - K + 1}{2 N} \geq \frac{N
- K}{2 N} = \frac{1 - R_{\RS}}{2}. \qedhere \]
\end{proof}

\section{RS-RM Concatenation versus Shadow Codes}

In this section, we compare the lower bounds
on relative distance of the proposed concatenated RS-RM
code and a shadow code of degree $\leq 1$. To have a fair
comparison, we assume that both codes have the same length $n$ and rate $R$.
Note that for a given $n$, in order to have an RS--RM concatenation
we must have $2^{2m+1} \geq n$. This sets a lower bound on $m$.
We choose $\mathcal{C}_{\RS}(N,K)$ and $\RM(1,m)$ codes 
in the concatenated scheme such
that their lengths over their corresponding fields agree, \emph{i.e.}, $N =
2^m$. In this case, $n = 4^{m}$. 

According to Theorem~\ref{thm:k sqrt n}, in order to use the bound
(\ref{eq:shadow min distance}) on $\dmin(C(\mathcal{E}, \mathcal{B}_1))$ the
dimension of the code, $k$, must satisfy $k \leq 2^m$. Therefore,
(\ref{eq:rcon}) implies that
\begin{equation}
R_{\RS} \leq \frac{1}{m+1}.
\label{eq:rho upper bound}
\end{equation}
From (\ref{eq:shadow min distance}), the relative distance of 
the shadow code $C(\mathcal{E}, \mathcal{B}_1)$ satisfies
\begin{subequations}
\begin{align}
\delta_{\sh} &\geq \frac{1 - R + 4^{-m}}{2} -\frac{2^m\sqrt{1 + R - 4^{-m}}}{2}(R - 2^{-2m+1})\\
&\simeq \frac{1}{2}- \frac{2^m}{2}(R - 2^{-2m+1})\\
&=\frac{1-R_{\RS}(m+1) + 2^{-m+1}}{2}.\label{eq:deltash lb}
\end{align}
\end{subequations}
By choosing $R_{\RS}$ close enough to the upper bound in
(\ref{eq:rho upper bound}) we may further simplify (\ref{eq:deltash lb}) as
\begin{equation}
\delta_{\sh} \geq \frac{1 - R_{\RS}(m+1)}{2}.
\label{eq:shadow relative distance}
\end{equation}
Comparing (\ref{eq:shadow relative distance}) with (\ref{eq:con relative distance})
implies that the lower bound on $\delta_{\con}$ is larger than the one on
$\delta_{\sh}$.
Appendix~\ref{sec:asymptotics} considers the behaviour of these
bounds in the region where $\delta \to \frac{1}{2}$
as $n \to \infty$, which is the regime for which shadow codes
were originally intended \cite{cherubini2024anew}.

\section{Results}

Figure~\ref{fig:results} compares the relative distance versus rate of binary
shadow codes, RS--RM concatenated code, Delsarte--Goethals codes, and first-
and second-order Reed--Muller codes.  For shadow codes, the indicated lines
represent lower bounds on the achievable relative distance.
It is clear from the figure that
restricting basic polynomials to have degree at most one results in an
improved lower bound on minimum distance compared with shadow codes having basic polynomials of degree two.
However, the lower bounds
for both schemes are considerably inferior to the RS--RM concatenated
code.
Neither of the shadow code lower bounds provides an improvement over
Delsarte--Goethals codes, except at extremely long block lengths.  

For $n=2^{10}$ we also computed parameters of some random linear
codes and some constructed shadow codes with basic polynomials
having degree at most one.  The results show that, at
least for this block length, the lower bound on minimum distance
for shadow codes is far from being tight.   Further work is probably
needed to provide improved bounds (or actual minimum distances).
Further work would also be needed to develop an efficient decoding
algorithm for shadow codes.

\begin{figure}
\mbox{}\hspace*{-1cm}\begin{tabular}{cc}
\includegraphics[scale=0.69]{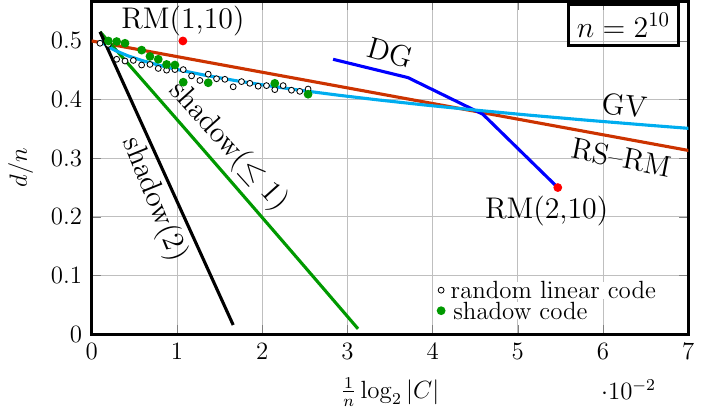}  &
\includegraphics[scale=0.69]{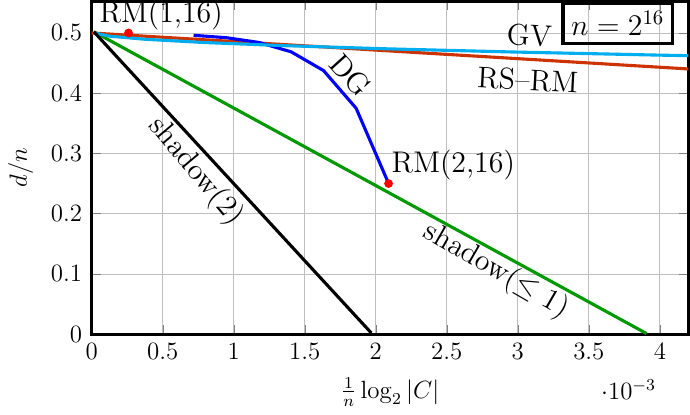}   \\
\includegraphics[scale=0.69]{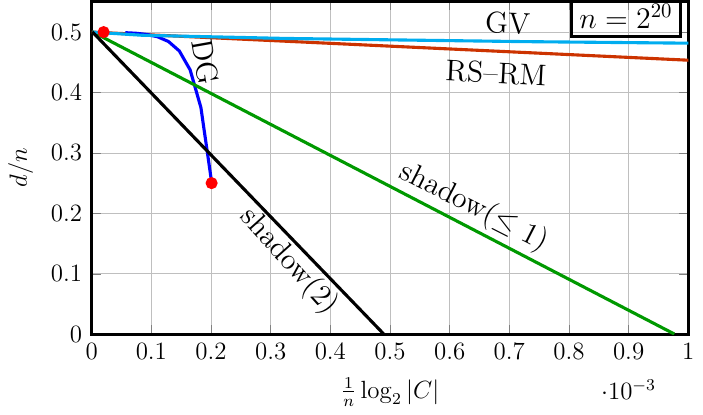}  &
\includegraphics[scale=0.69]{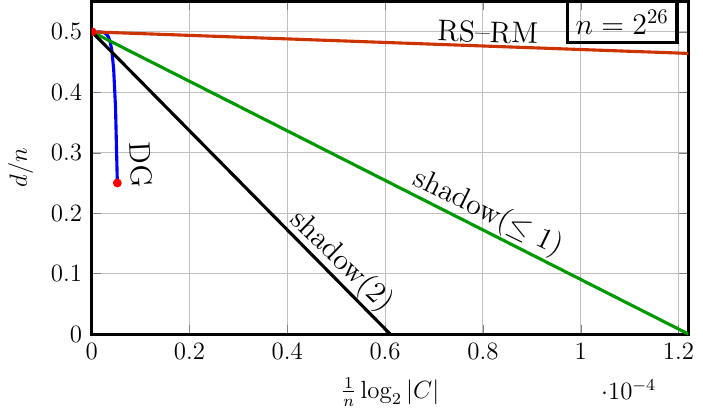} 
\end{tabular}
\caption{Comparing binary shadow codes of degree at most 1 and degree 2, with
the RS--RM concatenated, Delsarte--Goethals (DG), and first-order RM codes of the same
length.  The vertical axis denotes relative distance and the
horizontal axis denotes the code rate.   Also shown is the Gilbert-Varshamov
(GV) bound~\cite[Ch.~1, Thm.~12]{MS77}.
 For shadow codes, the indicated
lines represent lower bounds on relative distance.
The first- and second-order RM codes are
shown with a single dot.  For $n=2^{10}$, actual code parameters for
shadow codes and some randomly generated linear codes are also shown.}
\label{fig:results}
\end{figure}

\appendix
\section{The Regime where $\delta \to \frac{1}{2}$ as $n \to \infty$}
\label{sec:asymptotics}

From (\ref{eq:shadow min distance}) we know that the relative
minimum Hamming distance $\delta_{\sh}(n,k)$ of an $(n,k)$
shadow code of degree at most 1 satisfies
\[
\delta_{\sh}(n,k) \geq \frac{1}{2} -
\frac{k-1 + (k-2)\sqrt{n + k-1}}{2n},
\]
provided that the right-hand side is positive.
The latter condition is guaranteed by Theorem~\ref{thm:k sqrt n}
to hold when $k \leq \sqrt{n} + 0.5$.
Suppose we select $k = n^a$ for some $a \in (0,1/2]$. 
Since, in the limit as $n \to \infty$, we have
\[
\frac{n^a-1 + (n^a-2)\sqrt{n + n^a -1}}{2n} \to 
\begin{cases} 0 & \text{if }a < \frac{1}{2}; \\
             \frac{1}{2} & \text{if }a = \frac{1}{2},
\end{cases}
\]
we see that in the same limit
\[
\delta_{\sh}(n,n^a) \to
\begin{cases} \frac{1}{2} & \text{if }a < \frac{1}{2}, \\
                                            0  & \text{if }a = \frac{1}{2}.
\end{cases}
\]      
Thus, for example,  the family of $(n,n^{0.49})$ shadow codes
of degree at most one
has relative distance converging to $\frac{1}{2}$ as $n \to \infty$.
Putting $k = n^{1/2-\epsilon}$ for $\epsilon \in (0,1/2)$
gives
$\delta_{\sh}(n,n^{1/2-\epsilon}) = \frac{1}{2} + \mathcal{O}(n^{-\epsilon})$,
as shown in \cite{cherubini2024anew}.

In contrast, a first-order Reed--Muller code of length $n=2^m$
has relative distance exactly $\frac{1}{2}$, but a much
smaller dimension $k = 1+m = 1 + \log_2(n)$.
The Delsarte--Goethals code  $DG(m=2t+2,d)$, with $2 \leq d \leq t+1$
has relative distance
$\frac{1}{2} - \frac{1}{2^{d+1}}$ which approaches $1/2$ as
$t \rightarrow \infty$ provided that $d$ is any increasing function
of $t$.  However, the number of codewords is never more than
that of the second-order Reed-Muller code of length $n=2^{m}$,
which has dimension $1 + \binom{m}{1} + \binom{m}{2}$, which scales
as a quadratic polynomial in $m=\log_2(n)$.
Thus shadow codes are far better than the first-order Reed--Muller
and Delsarte--Goethals in the regime considered in this appendix.

However,
the concatenation of a $(2^m,2^m-K,2^m-K+1)$ Reed--Solomon code
over $\mathbb{F}_{2^{m+1}}$ with a binary
$(2^m, m+1, 2^{m-1})$
first-order Reed--Muller code gives a binary
$(2^{2m}, K(m+1), (2^m-K+1)2^{m-1})$ code having relative distance
\[
\delta_{\con}(m,K) = \frac{1}{2} - \frac{K-1}{2^{m+1}}.
\]
In terms of the parameters $n = 2^{2m}$ and $k = K(m+1)$, we have
\begin{equation}
\delta_{\con}(n,k) = \frac{1}{2} - \frac{k - \log_2\sqrt{n} - 1}{\sqrt{n} (\log_2(n) + 2)}.
\label{eqn:deltacon}
\end{equation}
Suppose we set $k = n^a \log_2(n)$ for some $a \in (0,1/2]$.
We see, in the limit as $n \to \infty$, that
\[
\delta_{\con}(n,n^a \log_2(n)) \to 
\begin{cases} \frac{1}{2} & \text{if }a < \frac{1}{2}, \\
                                            0  & \text{if }a = \frac{1}{2}.
\end{cases}
\]
Thus, for example, the family of $(n,n^{0.49}\log_2(n))$ concatenated
RS-RM codes
has relative distance converging to $\frac{1}{2}$ as $n \to \infty$.

According to (\ref{eqn:deltacon}),
to achieve relative distance $\delta_{con}(n,k) = \frac{1}{2} - \epsilon$
we must set
$k = \epsilon\sqrt{n} (\log_2 n + 2) + \log_2 \sqrt{n} + 1$,
thus
\[
\frac{k}{n} = \frac{\epsilon( \log_2 n + 2)}{\sqrt{n}}
+ \frac{ \log_2 \sqrt{n} + 1}{n} = \Omega \left( \frac{\epsilon \log_2 n}{\sqrt{n}} \right).
\]
In other words, for a fixed
$\epsilon > 0$, the concatenated RS-RM codes achieve
relative distance $\delta_{\con} = 1 -\epsilon$ with
code rate $R_{\con}$ scaling as $\Omega(\epsilon n^{-1/2} \log_2 n)$.
If $\epsilon$ is scaled with $n$ as $\epsilon = n^{-\alpha}$
for $\alpha \in (0,1/2)$,
then the concatenated RS-RM codes achieve
relative distance $\frac{1}{2} - n^{-\alpha}$ with
rate $\Omega(n^{-(\frac{1}{2} + \alpha)} \log_2 n)$.

To make a direct comparison between concatenated RS-RM and shadow codes, we can scale
their rate according to the same function of block length.
Fig.~\ref{fig:k is root n} shows the relative distance vs.~rate for concatenated RS-RM concatenated
code and the shadow codes of degree at most $1$, both having
dimension $k = n^{0.49}$. 
Although both schemes approach a relative distance of $1/2$ at large block
lengths, concatenated RS-RM code always have a greater relative distance guarantee.  We must emphasize again, however, that the \emph{actual} relative
distance of shadow codes may be greater than the indicated lower bound.
\begin{figure}[b]
\centering
\includegraphics[scale=0.72]{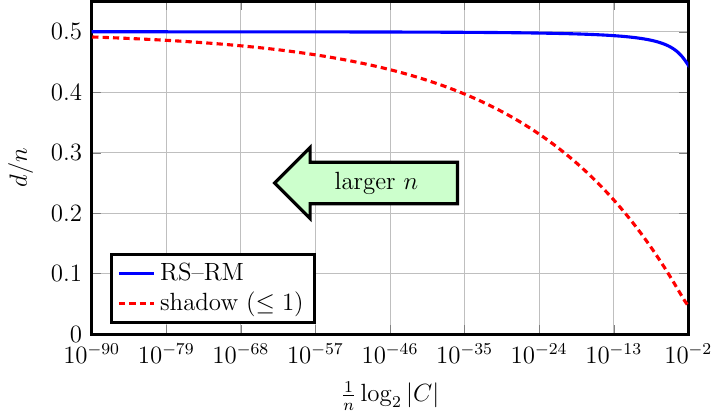}
\caption{Bounds on relative distance versus rate when
both code families have dimension $k$ scaling with
block length $n$ as $k = n^{0.49}$.}
\label{fig:k is root n}
\end{figure}

\bibliographystyle{IEEEtran}
\bibliography{IEEEabrv,ref}

\end{document}